\newcommand\blfootnote[1]{%
  \begingroup
  \renewcommand\thefootnote{}\footnote{#1}%
  \addtocounter{footnote}{-1}%
  \endgroup
}
\newtheorem{theorem}{Theorem}[section]
\newtheorem{corollary}[theorem]{Corollary}
\theoremstyle{definition}
\newtheorem{definition}[theorem]{Definition}
\theoremstyle{remark}
\declaretheoremstyle[
    spaceabove=6pt, 
    spacebelow=6pt, 
    headfont=\normalfont\bfseries,
    notefont=\mdseries\bfseries, 
    notebraces={(}{)}, 
    bodyfont=\normalfont\itshape,
    postheadspace=1em,
    headpunct={:}]{mystyle}
\DeclareMathAlphabet{\mathbbx}{U}{BOONDOX-ds}{m}{n}
\newcommand{\norm}[1]{\left\lVert#1\right\rVert}
\newcommand{\prmodel}{RIP model }
\title{\large An Optimization-based Approach To Node Role Discovery in Networks:\\ Approximating Equitable Partitions}
\author{%
  Michael Scholkemper\\
  Department of Computer Science\\
  RWTH Aachen University\\
  \texttt{scholkemper@cs.rwth-aachen.de} \\ 
  \And
  Michael T. Schaub \\
  Department of Computer Science \\
  RWTH Aachen University \\
  \texttt{schaub@cs.rwth-aachen.de} \\
}
\begin{document}

\maketitle

\vspace{-0.5cm}
\begin{abstract}
    Similar to community detection, partitioning the nodes of a network according to their structural roles aims to identify fundamental building blocks of a network.
    The found partitions can be used, e.g., to simplify descriptions of the network connectivity, to derive reduced order models for dynamical processes unfolding on processes, or as ingredients for various graph mining tasks.
    In this work, we offer a fresh look on the problem of role extraction and its differences to community detection and present a definition of node roles related to graph-isomorphism tests, the Weisfeiler-Leman algorithm and equitable partitions. We study two associated optimization problems (cost functions) grounded in ideas from graph isomorphism testing, and present theoretical guarantees associated to the solutions of these problems.
    Finally, we validate our approach via a novel ``role-infused partition benchmark'', a network model from which we can sample networks in which nodes are endowed with different roles in a stochastic way.
  \end{abstract}
\vspace{-0.25cm}

\section{Introduction}
  Networks are a powerful abstraction for a range of complex systems~\cite{newman2018networks, strogatz2001exploring}.
  To comprehend such networks we often seek patterns in their connections, e.g., core-periphery structures or densely knit communities.
  A complementary notion to community structure is that of a \emph{role} partition of the nodes.
  The concept of node roles, or node equivalences, originates in social network analysis~\cite{knoke2019social} and node roles are often related to symmetries or connectivity features that can be used to simplify complex networks.
  Contrary to communities, even nodes that are far apart or are part of different connected components of a network can have the same role~\cite{rossi2020proximity}. 
  \blfootnote{This research has been supported by the DFG RTG 2236 “UnRAVeL” – Uncertainty and Randomness in Algorithms, Verification and Logic." and the Ministry of Culture and Science of North Rhine-Westphalia (NRW Rückkehrprogramm).}

  Traditional approaches to define node roles, put forward in the context of social network analysis~\cite{brandes2005network}
  consider \emph{exact node equivalences}, based on structural symmetries within the graph structure.
  The earliest notion is that of \textit{structural equivalence}~\cite{lorrain1971structural}, which assigns the same role to nodes if they are adjacent to the same nodes.
  Another definition is that of  \textit{automorphic equivalence}~\cite{everett1994regular}, which states that nodes are equivalent if they belong to the same automorphism orbits.
  Closely related is the idea of \textit{regular equivalent} nodes~\cite{white1983graph}, defined recursively as nodes that are adjacent to equivalent nodes.
 
  However, large real-world networks often manifest in such a way that these definitions result in a vast number of different roles.
  What's more, the above definitions do not define a similarity metric between nodes and it is thus not obvious how to compare two nodes that are deemed \textit{not equivalent}.
  For example, the above definitions all have in common that nodes with different degrees also have different roles. 
  With the aim of reducing a network's complexity, this is detrimental.
  
  To resolve this problem in a principled way and provide an effective partitioning of large graphs into nodes with similar roles, we present a quantitative definition of node roles in this paper.
  Our definition of roles is based on so-called \textit{equitable partitions} (EPs), which are strongly related to the notion of regular equivalence~\cite{white1983graph}.
  Crucially, this not only allows us to define an equivalence, but we can also quantify the deviation from an exact equivalence numerically.
  Further, the notion of EPs generalizes orbit partitions induced by automorphic equivalence classes in a principled way and thus remain tightly coupled to graph symmetries.
  Knowledge of EPs in a particular graph can, e.g., facilitate the computation of network statistics such as centrality measures~\cite{sanchez2020exploiting}.
  As they are associated with certain spectral signatures, EPs are also relevant for the study of dynamical processes on networks such as cluster synchronization~\cite{pecora2014cluster,schaub2016graph}, consensus dynamics~\cite{yuan2013decentralised}, and network control problems~\cite{martini2010controllability}. 
  They have even been shown to effectively imply an upper bound on the expressivity of Graph Neural Networks~\cite{morris2019weisfeiler,xu2018powerful}.
  
  \paragraph{Related Literature}
  The survey by \citet{rossi2014role} puts forward an application-based approach to node role extraction that evaluates the node roles by how well they can be utilized in a downstream machine learning task. 
  However, this perspective is task-specific and more applicable to node embeddings based on roles rather than the actual extraction of roles.

  Apart from the already mentioned \emph{exact} node equivalences originating from social network analysis, there exist numerous works on role extraction, which focus on finding nodes with \emph{similar} roles, by associating each node with a feature vector that is independent of the precise location of the nodes in the graph. 
  These feature vectors can then be clustered to assign nodes to roles.
  A recent overview article~\cite{rossi2020proximity} puts forward three categories:
  First, graphlet-based approaches \cite{prvzulj2007biological, rossi2018estimation, liu2020learning} use the number of graph homomorphisms of small structures to create node embeddings. 
  This retrieves extensive, highly local information such as the number of triangles a node is part of. 
  Second, walk-based approaches \cite{ahmed2019role2vec, cooper2010role} embed nodes based on certain statistics of random walks starting at each node. 
  Finally, matrix-factorization-based approaches \cite{henderson2011rolx, jin2019smart} find a rank-$r$ approximation of a node feature matrix ($F\approx MG$). Then, the left side multiplicand $M \in \mathbb{R}^{|V| \times r}$ of this factorization is used as a soft assignment of the nodes to $r$ clusters. 
  
  Jin et al. \cite{jin2021towards} provide a comparison of many such node embedding techniques in terms of their ability to capture exact node roles such as structural, automorphic, and regular node equivalence. 
  Detailed overviews of (exact) role extraction and its links to related topics such as block modeling are also given in \cite{browet2014algorithms, cason2012role}. 
  
  \paragraph{Contribution}  
  Our main constributions are as follows:
    \begin{itemize}
        \item We provide a principled stochastic notion of node roles, grounded in  equitable partitions, which enables us to rigorously define node roles in complex networks.
        \item We provide a family of cost functions to assess the quality of a putative role partitioning.
      Specifically, using a depth parameter $d$ we can control how much of a node's neighborhood is taken into account when assigning roles.
        \item We present algorithms to minimize the corresponding optimization problems and derive associated theoretical guarantees.
        \item We develop a generative graph model that can be used to systematically test the recovery of roles in numerical experiments, and use this novel benchmark model to test our algorithms and compare them to well-known role detection algorithms from the literature.
    \end{itemize}
  
  \section{Notation and Preliminaries}
  \noindent\textbf{Graphs.}
  A simple graph $G = (V, E)$ consists of a node set $V$ and an edge set $E = \{uv \mid u,v \in V\}$.
  The neighborhood $N(v) = \{x \mid vx \in E\}$ of a node $v$ is the set of all nodes connected to $v$. We allow self-loops $vv \in E$ and positive edge weights $w: E \rightarrow \mathbb{R}_+$.
  
  \noindent\textbf{Matrices.}
  For a matrix $M$, $M_{i,j}$ is the component in the $i$-th row and $j$-th column. 
  We use $M_{i,\_}$ to denote the $i$-th row vector of $M$ and $M_{\_,j}$ to denote the $j$-th column vector.
  $\mathbb{I}_n$ is the identity matrix and $\mathbbx{1}_n$ the all-ones vector, both of size $n$ respectively.
  Given a graph $G = (V,E)$, we identify the node set $V$ with $\{1, \ldots, n\}$. 
  An \emph{adjacency matrix} of a given graph is a matrix $A$ with entries $A_{u,v} = 0$ if $uv \notin E$ and $A_{u,v} = w(uv)$ otherwise, where we set $w(uv)=1$ for unweighted graphs for all $uv\in E$. $\rho(A)$ denotes the largest eigenvalue of the matrix $A$.
  
  \noindent\textbf{Partitions.}
  A node partition $C = (C_1, C_2, ..., C_k)$ is a division of the node set $V = C_1 \dot\cup C_2 \dot\cup \cdots \dot\cup C_k$ into $k$ disjoint subsets, such that each node is part of exactly one \textit{class} $C_i$.
  For a node $v \in V$, we write $C(v)$ to denote the class $C_i$ where $v \in C_i$.
  We say a partition $C'$ is coarser than $C$ ($C' \sqsupseteq C$) if $C'(v) \neq C'(u) \implies C(v) \neq C(u)$.
  For a partition $C$, there exists a partition indicator matrix $H \in \{0,1\}^{|V| \times k}$ with $H_{i,j} = 1 \iff i \in C_j$. 

  \subsection{Equitable Partitions.}
  An equitable partition (EP) is a partition $C = (C_1, C_2, ..., C_k)$ such that $v,u \in C_i$ implies that 
  \begin{equation}
    \label{eq:ep_def}
    \sum_{x \in N(v)} [C(x) = C_j] = \sum_{x \in N(u)} [C(x) = C_j]
  \end{equation}
  for all $1 \leq j \leq k$, where the Iverson bracket $[C(x) = C_j]$ is $1$ if $C(x) = C_j$ and $0$ otherwise.
  The coarsest EP (cEP) is the equitable partition with the minimum number of classes~$k$.
  A standard algorithm to compute the cEP is the so-called Weisfeiler-Leman (WL) algorithm \cite{weisfeiler1968reduction}, which iteratively assigns a color $c(v) \in \mathbb{N}$ to each node $v\in V$ starting from a constant initial coloring.
  In each iteration, an update of the following form is computed:
  \begin{equation}
    \label{eq:def_WL}
    c^{t+1}(v) = \text{hash}\left(c^t(v), \{\!\{c^{t}(x)|x \in N(v)\}\!\}\right)
  \end{equation}
  where $\text{hash}$ is an injective hash function, and $\{\!\{\cdot\}\!\}$ denotes a multiset (in which elements can appear more than once).
  In each iteration, the algorithm splits classes that do not conform with \cref{eq:ep_def}.
  At some point $T$, the partition induced by the coloring no longer changes and the algorithm terminates returning the cEP as $\{(c^T)^{-1}(c^T(v)) | v \in V\}$.
  While simple, the algorithm is a powerful tool and is used as a subroutine in graph isomorphism testing algorithms \cite{babai2016graph, mckay2014practical}.

  The above definition is useful algorithmically, but only allows to distinguish between exactly equivalent vs. non-equivalent nodes.
  To obtain a meaningful quantitative metric to gauge the quality of a partition, the following equivalent algebraic characterization of an EP will be instrumental:
  Given a graph $G$ with adjacency matrix $A$ and a partition indicator matrix $H_{\text{cEP}}$ of the cEP, it holds that:
  \begin{equation}
    \label{eq:def_algebraic_EP}
    A H_{\text{cEP}} = H_{\text{cEP}} (H_{\text{cEP}}^\top H_{\text{cEP}})^{-1} H_{\text{cEP}}^\top A H_{\text{cEP}} =: H_{\text{cEP}} A^\pi.
  \end{equation}

The matrix $A H_{\text{cEP}}\in \mathbb{R}^{n\times k}$ counts in each row (for each node) the number of neighboring nodes within each class ($C_i$ for $i=1,\ldots,k$), which has to be equal to $H_{\text{cEP}} A^\pi$ --- a matrix in which each row (node) is assigned one of the $k$ rows of the $k\times k$ matrix $A^\pi$.
  Thus, from any node $v$ within in the same class $C_i$, the sum of edges from $v$ to neighboring nodes of a given class $C_k$ is equal to some fixed number  --- this is precisely the statement of \Cref{eq:ep_def}.
  The matrix $A^\pi$ containing the connectivity statistics between the different classes is the adjacency matrix of the \emph{quotient graph}, which has the following interesting properties.
  In particular, the adjacency matrix of the original graph inherits all eigenvalues from the quotient graph, as can be seen by direct computation.
  Specifically, let $(\lambda, \nu)$ be an eigenpair of $A^\pi$, then
  $
    AH_{\text{cEP}}\nu = H_{\text{cEP}} A^\pi \nu = \lambda H_{\text{cEP}}\nu
  $.
  
  This makes EPs interesting from a dynamical point of view: the dominant (if unique) eigenvector is shared between the graph and the quotient graph. 
  Hence, centrality measures such as Eigenvector Centrality or PageRank are predetermined if one knows the EP and the quotient graph~\cite{sanchez2020exploiting}.
  For similar reasons, the cEP also provides insights into the long-term behavior of other (non)-linear dynamical processes such as cluster synchronization~\cite{schaub2016graph}, consensus dynamics~\cite{yuan2013decentralised}, or message passing graph neural networks.
  Recently, there has been some study on relaxing the notion of ``\textit{exactly}'' equitable partitions. One approach is to compare the equivalence classes generated by \cref{eq:def_WL} by computing the edit distance of the trees (so called unravellings) that are encoded by these classes implicitly \cite{hendrik2021generalized}. Another way is to relax the hash function (\cref{eq:def_WL}) to not be injective. This way, ``buckets'' of coarser equivalence classes are created \cite{bause2022gradual}. Finally, using a less algorithmic perspective, one can define the problem of approximating EP by specifying a tolerance $\epsilon$ of allowed deviation from \cref{eq:ep_def} and consequently asking for the minimimum number of clusters that still satisfy this constraint \cite{kayali2022quasi}. In this paper, we adopt the opposite approach and instead specify a number of clusters $k$ and then ask for the partition minimizing a cost function (section \ref{sec:approximating_the_ep}) i.e. the \textit{most equitable} partition with $k$ classes.
  We want to stress that while similar, none of these relaxations coincide with our proposed approach. 

  \subsection{The Stochastic Block Model}
  The Stochastic Block Model (SBM)~\cite{abbe2017community} is a generative network model which assumes that the node set is partitioned into blocks. 
  The probability of an edge between a node $i$ and a node $j$ is then only dependent on the blocks $B(i)$ and $B(j)$.
  Given the block labels the expected adjacency matrix of a network sampled from the SBM fulfills:
\begin{equation}\label{eq:SBM}
  \mathbb{E}[A] = H_B \Omega H_B^\top
\end{equation}
  where $H_B$ is the indicator matrix of the blocks and $\Omega_{B(i),B(j)} = \text{Pr}((i, j) \in E(G))$ is the probability with which an edge between the blocks $B(i)$ and $B(j)$ occurs.

  For simplicity, we allow self-loops in the network.
  The SBM is used especially often in the context of community detection, in the form of the \emph{planted partition model}.
 
  In this restriction of the SBM, there is only an \textit{inside} probability $p$ and an \textit{outside} probability $q$ and $\Omega_{i,j} = p \text{ if } i = j$ and $\Omega_{i,j} = q \text{ if } i \neq j$.
  Usually, one also restricts $p > q$, to obtain a homophilic community structure i.e., nodes of the same class are more likely to connect.
  However, $p < q$ (heterophilic communities) are also sometimes considered.
  
  \section{Communities vs. Roles}\label{sec:stochastic_roles}
  In this section, we more rigorously define ``communities'' and ``roles'' and their difference.
  To this end, we first consider certain extreme cases and then see how we can relax them stochastically.
  Throughout the paper, we use the term communities synonymously with what is often referred to as \textit{homophilic communities}, i.e., a community is a set of nodes that is more densely connected within the set than to the outside. 
  In this sense, one may think of a \textit{perfect} community partition into $k$ communities if the network consists of $k$ cliques.
  In contrast, we base our view of the term ``role'' on the cEP: If $C$ is the cEP, then $C(v)$ is $v$'s \textit{perfect} role. 
  In this sense, the perfect role partition into $k$ roles is present when the network has an exact EP with $k$ classes. This can be seen in the appendix.

  In real-world networks, such a perfect manifestation of communities and roles is rare.
  In fact, even if there was a real network with a perfect community (or role) structure, due to a noisy data collection process this structure would typically not be preserved.
  Hence, to make these concepts more useful in practice we need to relax them.
  For communities, the planted partition model relaxes this perfect notion of communities of disconnected cliques to a stochastic setting:
  The \emph{expected adjacency matrix} exhibits perfect (weighted) cliques --- even though each sampled adjacency matrix may not have such a clear structure. 
  To obtain a principled stochastic notion of a node role, we argue that a \textit{planted role model} should, by extension, have an exact cEP in expectation:
  \begin{definition}\label{def:stochastic_roles}
    Two nodes $u,v \in V$ have the same \textit{stochastic role} if they are in the same class in the cEP of $\mathbb{E}[A]$.
  \end{definition}
  The above definition is very general.
  To obtain a practical generative model from which we can sample networks with a planted role structure, we concentrate on the following sufficient condition.
  We define a probability distribution over adjacency matrices such that for a given a role partition $C$, for $x,y \in C_i$ and classes $C_j$ there exists a permutation $\sigma : C_j \rightarrow C_j$ such that 
  $
  \operatorname{Pr}(A_{x, z} = 1) = \operatorname{Pr}(A_{y,\sigma(z)} = 1) \quad \forall z \in C_j
  $.
  That is: \textit{two nodes have the same role if the stochastic generative process that links them to other nodes that have a certain role is the same up to symmetry.}
  Note that if we restrict $\sigma$ to the identity, we recover the SBM. Therefore, we will consider the SBM as our stochastic generative process in the following.

  \paragraph{\prmodel} In line with our above discussion we propose the \textit{role-infused partition} (RIP) model, to create a well defined benchmark for role discovery, which allows to contrast role and community structure.
  The \prmodel~is fully described by the parameters $p \in \mathbb{R},  c, k, n \in \mathbb{N}, \Omega_{\text{role}} \in \mathbb{R}^{k \times k}$ as follows:
  We sample from an SBM with parameters $\Omega, H_B$ (see~\cref{fig:objectives}) where

  \begin{equation}
    \Omega_{i,j} = \begin{cases}
      \Omega_{\text{role}_{i \text{ mod } c, j \text{ mod } c}} & \text{if } \lfloor \frac{i}{c} \rfloor = \lfloor \frac{j}{c} \rfloor \\
      p & \text{otherwise}
    \end{cases}
  \end{equation}
  where $H_B$ corresponds to $c\cdot k$ blocks of size $n$.
  There are effectively $c$ distinct communities, analogous to the planted partition model.
  The probability of nodes that are not in the same cluster to be adjacent is $p$.
  There are $c$ distinct communities - analogous to the planted partition model.
  The probability of being adjacent for nodes that are not in the same community is $p$.
  In each community, there are the same $k$ distinct roles with their respective probabilities to attach to one another as defined by $\Omega_{\text{role}}$.
  Each role has $n$ instances in each community.
  \begin{figure}[t]
      \begin{minipage}[T]{0.49\textwidth}
        \vspace{-0.5cm}
        \includegraphics[width=\textwidth]{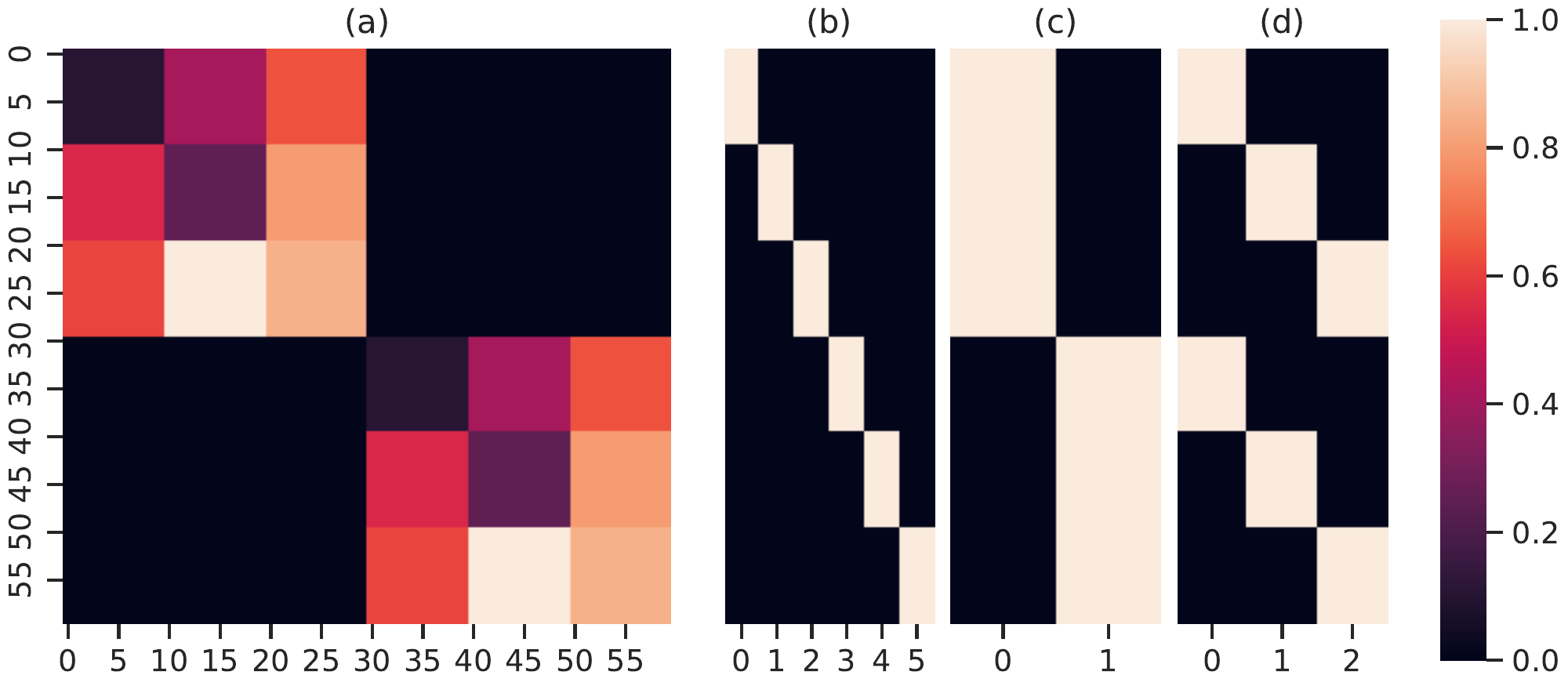}
        \captionof{figure}{\textbf{Example of the \prmodel.} It depicts (a) the expected adjacency matrix and the correct according to (b) SBM inference, (c) community detection, (d) role detection.}
        \label{fig:objectives}
    \end{minipage}\hspace{0.75cm}
    \begin{minipage}[T]{0.45\textwidth}

        \begin{algorithm}[H]
            \small
            \KwIn{Graph adjacency $A \in \{0,1\}^{n \times n}$, number of classes $k$}
            \KwOut{Node assignment $H \in \{0,1\}^{n \times k}$}
            \ \ $A = \operatorname{normalize}(A)$\\
            \ \ Initialize $H = \frac{1}{k}\mathbbx{1}_n \mathbbx{1}_k^T$\\
            \ \ \For{number of steps}{
               $X = A H$\\
               $H = \operatorname{cluster}(X)$
            }
            \ \ Return $H$
            \caption{\textbf{Approximate Weisfeiler Lehman Algorithm.}}
            \label{alg:SBM_inference}
        \end{algorithm}
        \end{minipage}
  \end{figure}
 
  Notice that the \prmodel~has both a planted community structure with $c$ communities and a planted role structure, since $\mathbb{E}[A]$ has an \textit{exact} cEP with $k$ classes (\cref{def:stochastic_roles}). 
  We stress that the central purpose of our model is to delineate the role recovery from community detection, i.e., community detection is \emph{not} the endeavor of this paper. Rather, the planted communities within the \prmodel~are meant precisely as an alternative structure that can be found in the data and serve as a control mechanism to determine what structure an algorithm finds. 
  To showcase this, consider \Cref{fig:objectives} which shows an example of the \prmodel~for $c=2, k=3, n=10, p=0.1$. It shows a graph that has $2$ communities each of which can be subdivided into the same $3$ roles.
  In standard SBM inference, one would like to obtain $6$ blocks - each combination of community and role within being assigned its own block.
  In community detection with the objective to obtain $2$ communities, the target clustering would be to merge the first $3$ and the second $3$ into one cluster respectively.
  However, the target clustering for this paper --- aiming for $3$ roles --- is the one on the far right, combining from each community the nodes that have stochastically the same neighborhood structure.

  \section{Extracting Roles by Approximating the cEP}
  \label{sec:approximating_the_ep}
  In this section, we define a family of cost functions (eq.~\ref{eq:ep_cost_function},~\ref{eq:deep_ep_cost_function}) that frame role extraction as an optimization problem.
That is, we try to answer the question: \textit{Given a desired number $k$ of roles classes, what is the partition that is most like an EP?} 
As discussed above, searching for an \textit{exact} equitable partition with a small number of classes is often not possible: It returns the singleton partition on almost all random graphs \cite{babai1979canonical}. 
  Already small asymmetries, or inaccuracies and noise in data collection can lead to a trivial cEP made up of singleton classes. 
  As such, the cEP is not a robust nor a particularly useful choice for noisy or even just slightly asymmetric data.
  Our remedy to the problem is to search for coarser partitions that are closest to being equitable. 
  
  Considering the algebraic definition of cEP (eq. \ref{eq:ep_def}), intuitively one would like to minimize the difference between the left- and the right-hand side
  (throughout the paper, we use the $\ell_2$ norm by default and the $\ell_1$ norm where specified):
  \begin{equation}
    \label{eq:ep_cost_function}
    \Gamma_{\text{EP}}(A, H) = || AH - H D^{-1} H^\top A H ||
  \end{equation}
  Here $D = \operatorname{diag}(\mathbbx{1} H)$ is the diagonal matrix with the sizes of the classes on its diagonal.
  We note that $HD^{-1}H^\top = H(H^\top H)^{-1}H^\top = H H^\dagger$ is the projection onto the column space of $H$. 
  However, \cref{eq:ep_cost_function} disregards an interesting aspect that the \textit{exact} cEP has. By its definition, the cEP is invariant under multiplication with~$A$. That is,
  \begin{equation*}
    A^t H_{\text{cEP}} = H_{\text{cEP}} (A^\pi)^t \quad \text{ for all } t \in \mathbb{N} 
  \end{equation*}
  This is especially interesting from a dynamical systems point of view since dynamics cannot leave the cEP subspace once they are inside it.
  Indeed, even complex dynamical systems such as Graph Neural Networks suffer from this restriction \cite{xu2018powerful,morris2019weisfeiler}.    
  To address this, we put forward the following family of cost functions.
  \begin{equation}
    \label{eq:deep_ep_cost_function}
    \Gamma_{d\text{-EP}}(A, H) = \sum_{t=1}^d \frac{1}{\rho(A)^t} \Gamma_{\text{EP}}(A^t, H)
  \end{equation}
  The factor of $\frac{1}{\rho(A)^i}$ is to rescale the impacts of each matrix power and not disproportionally enhance larger matrix powers.
  This family of cost functions measures how far the linear dynamical system $t\mapsto A^tH$ diverges from a corresponding equitable dynamical system after $d$ steps.  Equivalently, it takes the $d$-hop neighborhood of each node into account when assigning roles. The larger $d$, the \textit{deeper} it looks into the surroundings of a node.
  Note that all functions of this family have in common that if $H_{\text{cEP}}$ indicates the exact cEP, then $\Gamma_{d\text{-EP}}(A, H_{\text{cEP}}) = 0$ for any choice of $d$.

  In the following, we consider the two specific cost functions with extremal values of $d$ for our theoretical results and our experiments:
  For $d=1$, $\Gamma_{1\text{-EP}}$ is a measure of the variance of each node's adjacencies from the mean adjacencies in each class (and equivalent to \cref{eq:ep_cost_function}). As such, it only measures the differences in the direct adjacencies and disregards the longer-range connections. 
  We call this the \textit{short-term} cost function. 
  The other extreme we consider is $\Gamma_{\infty \text{-EP}}$, where $d \rightarrow \infty$. This function takes into account the position of a node within the whole graph.
  It takes into the long-range connectivity patterns around each node.  We call this function the \textit{long-term} cost.

  In the following sections, we aim to optimize these objective functions to obtain a clustering of the nodes according to their roles. 
  However, when optimizing this family of functions, in general, there exist examples where the optimal assignment is not isomorphism equivariant (See Appendix). 
  As isomorphic nodes have \textit{exactly} the same \textit{global} neighborhood structure, arguably, they should be assigned the same role.
  To remedy this, we restrict ourselves to partitions compatible with the cEP when searching for the minimizer of these cost functions.

  \subsection{Optimizing the Long-term Cost Function}
  \label{alg:EV}

  In this section, we consider optimizing the long-term objective \cref{eq:deep_ep_cost_function}. 
  This is closely intertwined with the dominant eigenvector of $A$, as the following theorem shows:
  \begin{restatable}{thm}{EV}
    \label{thm:EV}
      Let $\mathcal{H}$ be the set of indicator matrices $H \in \{0,1\}^{n \times k}$ s.t. $H \mathbbx{1}_k = \mathbbx{1}_n$.
      Let $A \in \mathbb{R}^{n \times n}$ be an adjacency matrix. 
      Assume the dominant eigenvector to the eigenvalue $\rho(A)$ of $A$ is unique.
      Using the $\ell_1$ norm in \cref{eq:ep_cost_function}, the optimizer
      \begin{equation*}
        \text{OPT} = \operatorname{arg} \min_{H \in \mathcal{H}} \lim_{d \rightarrow \infty} \Gamma_{\text{d-EP}}(A,H)
      \end{equation*}
      can be computed in $\mathcal{O}(a + n k + n \log(n))$, where $a$ is the time needed to compute the dominant eigenvector of $A$.
    \end{restatable}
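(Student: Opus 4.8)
The plan is to show that in the limit $d\to\infty$ the minimization of $\Gamma_{d\text{-EP}}(A,\cdot)$ collapses to a one-dimensional clustering of the entries of the dominant eigenvector $v$ of $A$, and then to solve that clustering with a dynamic program; summing the costs of these steps yields the stated bound. First I would rescale a single term of the sum: $\rho(A)^{-t}\,\Gamma_{\text{EP}}(A^t,H)=\lVert (I-\Pi_H)\,(A/\rho(A))^t H\rVert_1$, where $\Pi_H:=HD^{-1}H^\top$ is the orthogonal projector onto the column space of $H$ (it does not depend on the outer norm). Since $A$ is symmetric and $\rho(A)$ has a unique eigenvector (and strictly dominates the remaining eigenvalues in modulus --- true for connected non-bipartite graphs, with Cesàro averages covering the periodic case), $(A/\rho(A))^t\to P:=vv^\top/\lVert v\rVert_2^2$ geometrically fast. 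Hence each term tends to $g(H):=\lVert (I-\Pi_H)PH\rVert_1$, the remainder $\Gamma_{d\text{-EP}}(A,H)-d\cdot g(H)$ stays bounded (and converges), and since $\mathcal H$ is finite, for all sufficiently large $d$ --- hence for the limiting optimizer --- $\mathrm{OPT}=\arg\min_{H\in\mathcal H}g(H)$.

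Second, I would compute $g$, where the $\ell_1$ norm is the crucial ingredient. Writing $s_j=\sum_{i\in C_j}v_i$ and $\bar v_{C}$ for the mean of $v$ on a class $C$, a direct calculation gives $\bigl((I-\Pi_H)PH\bigr)_{i,j}=\tfrac{s_j}{\lVert v\rVert_2^2}(v_i-\bar v_{C(i)})$, whence $g(H)=\tfrac{1}{\lVert v\rVert_2^2}\bigl(\sum_j\lvert s_j\rvert\bigr)\bigl(\sum_i\lvert v_i-\bar v_{C(i)}\rvert\bigr)$. By Perron--Frobenius $v$ may be taken entrywise nonnegative, so $\sum_j\lvert s_j\rvert=\sum_j s_j=\lVert v\rVert_1$ is a constant independent of $H$. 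Therefore $\mathrm{OPT}=\arg\min_{H\in\mathcal H}F(H)$ with $F(H):=\sum_j\sum_{i\in C_j}\lvert v_i-\bar v_{C_j}\rvert$, i.e.\ $\mathrm{OPT}$ is the partition into at most $k$ classes minimizing the total $\ell_1$ deviation of the eigenvector entries from their class means. (Under $\ell_2$ the ``mass'' factor would instead read $\sum_j s_j^2$, which itself depends on $H$ and couples with the deviation term; it is exactly the $\ell_1$ norm that decouples the problem into one dimension.)

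Third, I would solve this one-dimensional problem. Sort the entries $v_{(1)}\le\cdots\le v_{(n)}$ in $O(n\log n)$. An exchange argument --- equivalently, the quadrangle (Monge) inequality for the block cost $w(l,r)=\sum_{i=l}^r\lvert v_{(i)}-\bar v_{[l,r]}\rvert$ --- shows that some optimal partition consists of contiguous blocks of the sorted sequence and that the optimal split points are monotone. One then runs $f(i,m)=\min_{l\le i}\bigl(f(l-1,m-1)+w(l,i)\bigr)$ for $i\le n$, $m\le k$; with the Monge structure (SMAWK / divide-and-conquer optimization) and $O(1)$-time evaluations of $w$ from prefix sums of the sorted $v$, each of the $k$ layers costs $O(n)$, i.e.\ $O(nk)$ overall. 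Adding the eigenvector computation ($a$) and the sort gives the claimed $O(a+nk+n\log n)$. (If one additionally restricts to partitions refining the cEP, as discussed before the theorem, the same program runs on the merged value set and the bound is unchanged.)

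The delicate step is the first one: the series $\lim_{d\to\infty}\Gamma_{d\text{-EP}}$ has summands that do not vanish, so one must isolate its linear growth rate $g(H)$ via the geometric convergence of $(A/\rho(A))^t$ and the uniqueness of the Perron eigenvector, and then exploit the $\ell_1$-specific identity $\sum_j\lvert s_j\rvert=\lVert v\rVert_1$ to turn an a priori coupled matrix optimization into one-dimensional clustering. The remaining combinatorics --- contiguity of an optimal partition and the accelerated dynamic program --- is comparatively routine once the quadrangle inequality for $w$ is established.
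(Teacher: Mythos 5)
Your proposal is correct and follows essentially the same route as the paper's proof: pass to the limit $(A/\rho(A))^t \to vv^\top/\lVert v\rVert_2^2$, use the $\ell_1$ norm together with nonnegativity of the Perron vector and row-stochasticity of $H$ to factor the cost as $\bigl(\sum_j\lvert s_j\rvert\bigr)\bigl(\sum_i\lvert v_i-\bar v_{C(i)}\rvert\bigr)$ with the first factor constant, and then solve the resulting one-dimensional clustering by sorting plus a dynamic program in $\mathcal{O}(n\log n + nk)$. If anything, you are more careful than the paper at the one delicate point --- the paper equates the divergent sum with a single limiting term, whereas you correctly extract the linear growth rate $g(H)$ and argue via finiteness of $\mathcal{H}$, and you also flag the spectral-gap/periodicity caveat that the theorem's hypothesis alone does not rule out.
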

  The proof of the theorem directly yields a simple algorithm that efficiently computes the optimal assignment for the long-term cost function. Simply compute the dominant eigenvector $v$ and then cluster it using 1-dimensional $k$-means. We call this \textit{EV-based clustering}. 

\subsection{Optimizing the Short-term Cost Function}
  In contrast to the previous section, the short-term cost function is more challenging. In fact,
  \begin{restatable}{thm}{STCH}
    \label{thm:short_term_cost_hard}
    Optimizing the short-term cost is NP-hard.
  \end{restatable}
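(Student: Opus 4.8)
The plan is to prove NP-hardness of the decision version: given an adjacency matrix $A$, a number of classes $k$, and a threshold $\tau\ge0$, decide whether $\min_{H}\Gamma_{1\text{-EP}}(A,H)\le\tau$ over indicator matrices $H$ with $H\mathbbx{1}_k=\mathbbx{1}_n$. A useful first step is to rewrite the objective: with $\bar A_H:=HD^{-1}H^\top A$ (the matrix obtained from $A$ by replacing every row with the mean of the rows in its class), one has $\Gamma_{1\text{-EP}}(A,H)=\norm{(A-\bar A_H)H}$, so in the (default) $\ell_2$ norm
\begin{equation*}
  \Gamma_{1\text{-EP}}(A,H)^2=\sum_{j=1}^{k}\ \sum_{v\in C_j}\bigl\lVert (AH)_{v,\_}-\mu_{C_j}\bigr\rVert^2 ,
\end{equation*}
where $\mu_{C_j}$ is the mean of the rows $(AH)_{v,\_}$ over $v\in C_j$; that is, the cost is the total within-class variance of the rows of $AH$. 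This is visibly a $k$-means-type objective, with one essential complication: the points being clustered --- the rows of $AH$ --- are themselves a function of the clustering $H$. The reduction will engineer a gadget in which this self-reference is switched off for all near-optimal $H$, so that minimizing the cost becomes equivalent to solving ordinary Euclidean $k$-means, whose decision version is NP-hard (already for two clusters, with the dimension part of the input).

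Concretely, from an instance of $\kappa$-means --- points $x_1,\dots,x_m\in\mathbb{Q}^D_{\ge0}$ (nonnegative after a translation, which does not change the $\kappa$-means value) and a threshold $\tau$ --- I would build a weighted graph $G'$ on nodes $\{u_1,\dots,u_m\}\cup\{w_1,\dots,w_D\}\cup\{p_1,\dots,p_D\}$: an edge of weight $x_i^{(\ell)}$ between the ``data'' node $u_i$ and the ``coordinate'' node $w_\ell$, and one heavy pendant edge of weight $W_\ell$ between $w_\ell$ and $p_\ell$, where the $W_\ell$ are distinct, pairwise well separated, and of polynomial size. Set $k:=2D+\kappa$. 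For the ``intended'' assignments --- each $w_\ell$ and each $p_\ell$ alone in its class, the $u_i$ distributed arbitrarily over the remaining $\kappa$ classes --- one checks by direct computation that the data rows of $AH$ are exactly the point matrix padded with zeros (independently of the particular intended $H$), the singleton classes contribute nothing, and $\Gamma_{1\text{-EP}}(A,H)^2$ equals precisely the $\kappa$-means cost of the induced clustering of $x_1,\dots,x_m$; minimizing over intended assignments thus gives the optimal $\kappa$-means value. Conversely, any $H$ not of this shape must place into a common class two nodes whose ``degree into some class'' differs by at least a constant fraction of some $W_\ell$ (a pendant paired with a non-pendant, or a data node paired with a coordinate node), which contributes an $\Omega(W_\ell^2/n)$ term to the cost; choosing the $W_\ell$ larger than the largest attainable $\kappa$-means value then makes every such $H$ strictly suboptimal, so the global minimum of $\Gamma_{1\text{-EP}}(A(G'),\cdot)$ is the optimal $\kappa$-means value. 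Hence the $\kappa$-means instance has a clustering of cost $\le\tau$ iff $\min_H\Gamma_{1\text{-EP}}(A(G'),H)\le\sqrt{\tau}$. The construction is polynomial and uses only positive edge weights, which the model allows; and by taking the $W_\ell$ sufficiently spread out the gadget's cEP is finer than every assignment used above, so additionally restricting to cEP-compatible partitions does not affect the argument. (If an unweighted statement is wanted, the same variance identity supports a purely combinatorial variant, reducing from MAX-$\kappa$-CUT or $\kappa$-colouring on regular graphs, with an exact equitable $k$-partition playing the role that the planted clustering plays here.)

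The step I expect to be the crux is exactly this decoupling: because $AH$ changes when $H$ changes, one cannot invoke $k$-means hardness off the shelf, and the core of the proof is the case analysis showing that every assignment achieving cost below the threshold must isolate all coordinate- and pendant-nodes in singleton classes --- pinning down the ``data'' to the fixed point matrix so that, on the relevant part of the search space, $\Gamma_{1\text{-EP}}$ is an honest $k$-means cost. The surrounding quantitative estimates (lower-bounding the penalty of an arbitrary deviation tightly enough that the $W_\ell$ stay polynomial, and handling the squared-versus-unsquared norm, as well as the $\ell_1$ case) are then routine.
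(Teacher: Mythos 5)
Your reduction is essentially the paper's: both encode a Euclidean $k$-means instance via edge weights from data nodes to coordinate nodes, use heavy weights (your pendant edges; the paper's self-loops of weight $3D$ and $6D$ on its two coordinate nodes $a,b$) to force every near-optimal assignment to isolate the coordinate nodes in singletons, and then observe that the within-class variance of the rows of $AH$ collapses exactly to the $k$-means cost of the induced clustering of the data nodes. The only difference is that the paper reduces from \textsc{planar-k-means} (so it needs just two coordinate nodes and $k=k'+2$) while you reduce from general $D$-dimensional $k$-means with $k=\kappa+2D$; the decoupling argument and two-direction case analysis are the same, and your sketch is correct.
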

  In this section, we thus look into optimizing the short-term cost function by recovering the stochastic roles in the \prmodel. Given $s$ samples $A^{(s)}$ of the same \prmodel, asymptotically, the sample mean $\frac{1}{s}\sum_{i=1}^s A^{(i)} \rightarrow \mathbb{E}[A]$ converges to the expectation as $s \rightarrow \infty$. Thus, recovering the ground truth partition is consistent with minimizing the short-term cost in expectation.

  To extract the stochastic roles, we consider an approach similar to the WL algorithm which computes the exact cEP. 
  We call this the approximate WL algorithm (Algorithm \ref{alg:SBM_inference}). A variant of this without the clustering step was proposed in \cite{kersting2014power}.
  Starting with one class encompassing all nodes, the algorithm iteratively computes an embedding vector $x = (x_1, ..., x_k)$ for each node $v \in V$ according to the adjacencies of the classes:
  \begin{equation*}
    x_i = \sum_{u \in N(v)} [C^{(t)}(u) = C^{(t)}_i]
  \end{equation*}
  
  The produced embeddings are then clustered to obtain the partition into $k$ classes $H$ of the next iteration. The clustering routine can be chosen freely. 
  This is the big difference to the WL algorithm, which computes the number of classes on-the-fly --- without an upper bound to the number of classes.
  The main theoretical result of this section uses average linkage for clustering:
  
  \begin{restatable}{thm}{PRM}
    \label{thm:planted_role_model}
    Let $A$ be sampled from the \prmodel~with parameters $p \in \mathbb{R},  c\in \mathbb{N}, 3 \leq k\in \mathbb{N}, n \in \mathbb{N}, \Omega_{\text{role}} \in \mathbb{R}^{k \times k}$. Let $H_{\text{role}}^{(0)}, ..., H_{\text{role}}^{(T')}$ be the indicator matrices of each iteration when performing the exact WL algorithm on $\mathbb{E}[A]$.
    Let $\delta = \min_{0 \leq t' \leq T'} \min_{i\neq j}  ||(\Omega H_{\text{role}}^{(t')})_{i,\_} - (\Omega H_{\text{role}}^{(t')})_{j,\_}||$.
    Using average linkage in algorithm \ref{alg:SBM_inference} in the clustering step and assuming the former correctly infers $k$, if 
    \begin{equation}
      \label{eq:thm_planted_role_model}
      n > -\frac{9\mathcal{W}_{-1}((q-1)\delta^2/9k^2)}{2\delta^2}
    \end{equation}
    where $\mathcal{W}$ is the Lambert W function, then with probability at least $q$:
    Algorithm \ref{alg:SBM_inference} finds the correct role assignment using average linkage for clustering.
  \end{restatable}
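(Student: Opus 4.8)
The plan is to couple the run of Algorithm~\ref{alg:SBM_inference} on the random sample $A$ to the run of the exact WL algorithm on $\mathbb{E}[A]$ and to show, by induction on the iteration count, that with probability at least $q$ the two produce the same sequence of partitions $H_{\text{role}}^{(0)},\dots,H_{\text{role}}^{(T')}$ (up to relabeling of classes), so that after $T'\le k$ steps the algorithm outputs the $k$-class cEP of $\mathbb{E}[A]$, which is exactly the planted role partition of \Cref{def:stochastic_roles}. The base case is immediate since both start from the trivial all-in-one partition. For the inductive step, suppose the algorithm currently holds $H:=H_{\text{role}}^{(t)}$, which is a union of SBM blocks (nodes in the same block are never separated by WL on the block-constant matrix $\mathbb{E}[A]$). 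Its embedding matrix is $X=AH$, and since $\mathbb{E}[A]=H_B\Omega H_B^\top$ by \eqref{eq:SBM}, the rows of $\mathbb{E}[X]=\mathbb{E}[A]H$ are constant on blocks, take exactly one distinct value per class of $H_{\text{role}}^{(t+1)}$, and any two distinct such values are at Euclidean distance at least $n\delta$ by the definition of $\delta$. Hence it suffices to show that every row $X_{v,\_}$ lies within less than $n\delta/4$ of the value associated to $v$'s block; the clustering step then sees tight, well-separated blobs and, as argued below, returns $H_{\text{role}}^{(t+1)}$.

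For the concentration, fix a partition into classes $C_1,\dots,C_{k_t}$: each entry $X_{v,i}=\sum_{u\in C_i}A_{v,u}$ is a sum of independent Bernoulli variables (edges are drawn independently in the SBM), so Hoeffding's inequality bounds $\Pr[\,|X_{v,i}-\mathbb{E}X_{v,i}|\ge\tau\,]$ by $2\exp(-2\tau^2/|C_i|)$. To avoid the circular dependence between the random partition held by the algorithm and the random matrix $A$, I would not reason about ``the current partition of the algorithm'' directly; instead I fix once and for all the at most $T'+1\le k+1$ partitions that exact WL produces on $\mathbb{E}[A]$, and take a single union bound over these partitions, over all $ckn$ nodes, and over the at most $k$ coordinates, with $\tau=\Theta(n\delta/\sqrt{k})$ so that the per-coordinate deviations sum to less than $n\delta/4$. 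On this good event the induction closes: at step $t$ the algorithm holds $H_{\text{role}}^{(t)}$, so its embeddings are exactly the ones that were controlled, and the clustering step returns $H_{\text{role}}^{(t+1)}$.

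The clustering uses average linkage, for which I would invoke the following robustness fact: if every point lies within radius $r$ of one of $m$ centers that are pairwise at distance at least $D>4r$, then, whenever a current cluster is always contained in a single planted group, the smallest average inter-cluster distance is attained inside a group (it is at most $2r$, strictly below $D-2r$, which lower-bounds every cross-group average distance); hence the $m$ planted groups are assembled before any two of them merge, the (monotone) merge heights jump from $\le 2r$ to $\ge D-2r$ exactly at the level of $m$ clusters, and average linkage recovers the grouping induced by the centers once it is cut at that gap. Applying this at step $t$ with $m=k_{t+1}$, $r=n\delta/4$ and $D=n\delta$: the hypothesis that average linkage correctly infers the number of classes (culminating in $k$) means it reads $k_{t+1}$ off this gap and returns $H_{\text{role}}^{(t+1)}$, which is precisely what the inductive step requires; the edge-case hypothesis $k\ge 3$ is also used here.

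Putting the pieces together, the union bound over iterations, nodes and coordinates turns the requirement into a tail condition of the form $a(k)\cdot n\cdot\exp(-\Theta(n\delta^2))\le 1-q$ with $a(k)$ polynomial in $k$ and independent of $n$; isolating $n$ from the $n\,e^{-\Theta(n)}$ term is exactly what the lower branch $\mathcal{W}_{-1}$ of the Lambert $W$ function accomplishes, and it yields a bound of the form~\eqref{eq:thm_planted_role_model}. I expect the main obstacle to be the honest treatment of the dependency issue in the second paragraph --- making rigorous that concentration for the deterministic WL-on-$\mathbb{E}[A]$ partitions is enough --- together with carrying the constants through the average-linkage margin, the Hoeffding exponent and the normalization step of Algorithm~\ref{alg:SBM_inference}, so that they assemble into exactly the stated $\mathcal{W}_{-1}$ expression.
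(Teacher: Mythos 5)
Your proposal is correct and follows essentially the same route as the paper's proof: Chernoff/Hoeffding concentration of the node-to-class degree counts, a union bound over nodes, classes and the (deterministic) WL iterations on $\mathbb{E}[A]$ encoded in the $\min_{t'}$ of $\delta$, a separation argument showing average linkage merges within roles before across roles, and inversion of the resulting $n\,e^{-\Theta(n\delta^2)}$ tail condition via the $\mathcal{W}_{-1}$ branch of the Lambert $W$ function. The only difference is presentational: you make the induction over iterations and the decoupling of the random partition from $A$ explicit, which the paper handles implicitly by defining $\delta$ as a minimum over all exact-WL partitions of $\mathbb{E}[A]$.
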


  The proof hinges on the fact that the number of links from each node to the nodes of any class concentrates around the expectation. Given a sufficient concentration, the correct partitioning can then be determined by the clustering step.
 Notice, that even though we allow for the SBM to have more blocks than there are roles, the number of roles (and the number of nodes therein) is the delimiting factor here - not the overall number of nodes.
  Notice also that \cref{thm:planted_role_model} refers to \emph{exactly} recovering the partition from only \emph{one} sample.
  Typically, concentration results refer to a concentration of multiple samples from the same model. Such a result can be derived as a direct consequence of \Cref{thm:planted_role_model} and can be found in the appendix.
 The bound given in the theorem is somewhat crude in the sense that it scales very poorly as $\delta$ decreases.
  This is to be expected as the theorem claims exact recovery for all nodes with high probability.
  \paragraph{Fractional Assignments}
  In a regime, where the conditions of \Cref{thm:planted_role_model} do not hold, it may be beneficial to relax the problem. 
  A hard assignment in intermediate iterations, while possible, has shown to be empirically unstable (see experiments).
  Wrongly assigned nodes heavily impact the next iteration of the algorithm. 
  As a remedy, a soft assignment - while not entirely different - has proven more robust.
  We remain concerned with finding the minimizer $H$ of \cref{eq:ep_cost_function}
 However, we no longer constrain $H_{i,j} \in \{0,1\}$, but relax this to $0 \leq H_{i,j} \leq 1$. 
  $H$ must still be row-stochastic - i.e. $H \mathbbx{1} = \mathbbx{1}$.
  That is, a node may now be \emph{fractionally} assigned to multiple classes designating how strongly it belongs to each class. 
  This remedies the above problems, as algorithms such as Fuzzy $c$-means or Bayesian Gaussian Mixture Models are able to infer the number of clusters at runtime and must also not make a hard choice about which cluster a node belongs to.
  This also allows for Gradient Descent approaches like e.g. GNNs. 
  We investigate these thoughts empirically in the experiments section.

  \section{Numerical Experiments}
 
  \begin{figure}[t]
    
    \includegraphics[width=\textwidth]{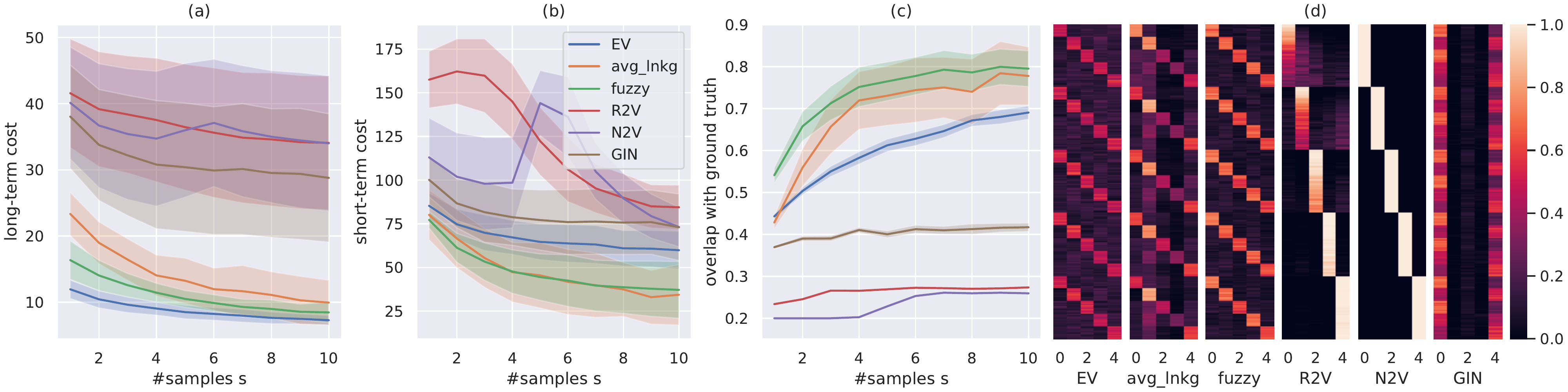}
    \caption{\textbf{Role recovery on graphs sampled from the \prmodel.} (a-c) On the x-axis, we vary the number of samples $s$ that are averaged to obtain the input $A$. The graphs used are randomly sampled from the planted partition model. On the y-axis, we report the long-term cost $c_{20-\text{EP}}$ (a), the short-term cost (b) and the overlap of the clusterings with the ground-truth (c) over 100 runs along with their standard deviation. In (d), the average role assignment (rows reordered to maximize overlap) is shown for the number of samples $s = 1$.}
    \label{fig:experiment1}
  \end{figure}

  For the following experiments, we use two variants of the approximate WL algorithm (\ref{alg:SBM_inference}), one where the clustering is done using average linkage and one where fuzzy $c$-means is used.
  We benchmark the EV-based clustering (\ref{alg:EV}) and the 2 variants of the approximate WL algorithm as well as node classes obtained from the role2vec \cite{ahmed2019role2vec} and the node2vec \cite{grover2016node2vec} node embeddings (called \textit{R2V} and \textit{N2V} in the figures). 
  We retrieve an assignment from the two baseline benchmark embeddings by $k$-means.
  Both node embedding techniques use autoencoders with skip-gram to compress information obtained by random walks. 
  While node2vec is a somewhat universal node embedding technique also taking into account the community structure of a network, role2vec is focussed on embedding a node due to its role. Both embeddings are state-of-the-art node embedding techniques used for many downstream tasks. 
  Further, we compare the above algorithms to the GIN \cite{xu2018powerful} which is trained to minimize the short-term cost individually on each graph. The GIN uses features of size $32$ that are uniformly initialized and is trained for $1000$ epochs. To enable a fair comparison, we convert the fractional assignments into hard assignments by taking the class with the highest probability for each node. Experimental data and code will be made available here.

  \paragraph{Experiment 1: Planted Role Recovery.}
  For this experiment, we sampled adjacency matrices $A^{(i)}$ from the \prmodel as described in \cref{sec:stochastic_roles} with $c=k=5, n=10, p=0.05, \Omega_{\text{role}} \in \mathbb{R}^{k \times k}$.
  Each component of $\Omega_{\text{role}}$ is sampled uniformly at random i.i.d from the interval $[0,1]$.
  We then sample $s$ samples from this \prmodel and perform the algorithms on the sample mean.
  The mean and standard deviation of long-term and short-term costs and the mean recovery accuracy of the ground truth and its variance are reported in \Cref{fig:experiment1} over 100 trials for each value of $s$.
  The overlap score of the assignment $C$ with the ground truth role assignment $C^{\text{gt}}$ is computed as:
  \begin{equation*}
    \operatorname{overlap}(C, C^{\text{gt}}) = \max_{\sigma \in \text{permutations}(\{1, ..., k\})} \sum_{i=1}^k \frac{|C_{\sigma(i)} \cap C^{\text{gt}}_i|}{|C_i|}
  \end{equation*}
  \Cref{fig:experiment1} (d) shows the mean role assignments output by each algorithm. 
  Since the columns of the output indicator matrix $H$ of the algorithms are not ordered in any specific way, we use the maximizing permutation $\sigma$ to align the columns before computing the average.

  \textit{Discussion.} In \Cref{fig:experiment1} (a), one can clearly see that the EV-based clustering outperforms all other algorithms measured by long-term cost, validating \Cref{thm:EV}. 
  While both approximate WL algorithms perform similarly in the cost function, the fuzzy variant has a slight edge in recovery accuracy.
  We can see that the tendencies for the short-term cost and the accuracy are directly adverse. The Approximate WL algorithms have the lowest cost and also the highest accuracy in recovery. The trend continues until both X2vec algorithms are similarly bad in both measures. 
  The GIN performs better than the X2vec algorithms both in terms of cost and accuracy. However, it mainly finds 2 clusters. This may be because of the (close to) uniform degree distribution in these graphs.

  On the contrary, the X2vec algorithms detect the communities instead of the roles. This is surprising for role2vec since it aims to detect roles.

  \begin{figure}
    \includegraphics[width=\textwidth]{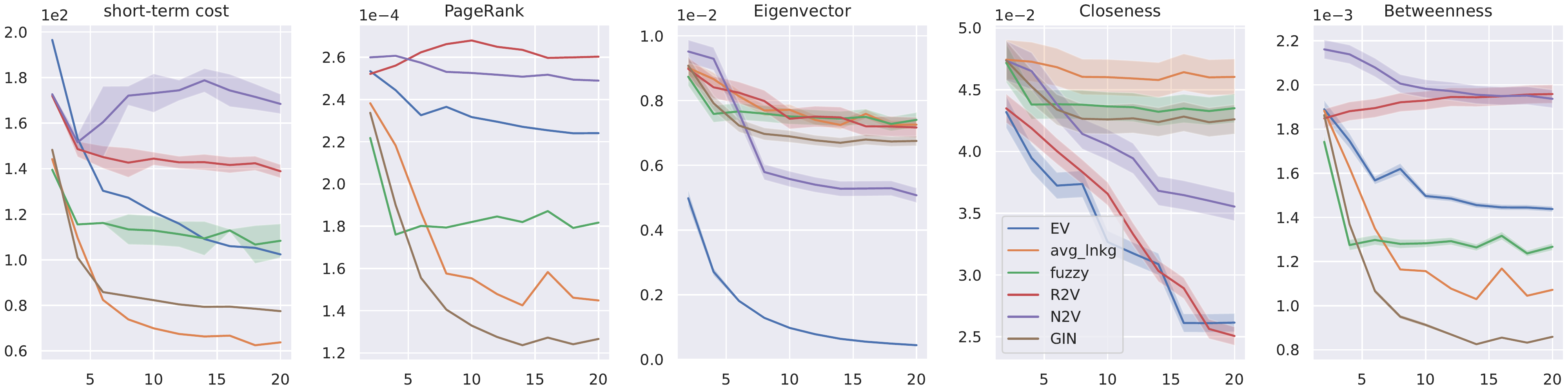}
    \caption{\textbf{Recovery of centralities on a real-world network.} On the x-axis, the number of classes $2 \leq k \leq 20$ that the algorithms are tasked to find is shown. On the y-axis, the mean short-term cost $c_{\text{EP}}$, the average deviation from the cluster mean is then shown from left to right for PageRank, Eigenvector centrality, Closeness and Betweenness over 10 trials on the protein dataset.}
    \label{fig:experiment2}
  \end{figure}

  \paragraph{Experiment 2: Inferring the Number of Roles and Centrality.}
  A prominent problem in practice that has been scarcely addressed in this paper so far is that the number of roles may not be known.
  Some algorithms --- like fuzzy $c$-means or GIN --- can infer the number of clusters while performing the clustering.
  In this experiment, we consider the \textit{protein} dataset \cite{proteins} and run the suite of algorithms for varying $2 \leq k \leq 20$. The mean short-term cost of the assignments and its standard deviation is reported in \Cref{fig:experiment2}. Additionally for the PageRank, Eigenvector, Closeness and Betweeness Centrality, the $l_1$ deviations of each node from the mean cluster value are reported.

  \textit{Discussion.} In \Cref{fig:experiment2}, all algorithms show a similar trend. The cost decreases as the number of clusters increases. The elbow method yields $k=4, 6$ depending on the algorithm.
  The GIN performs much better than in the previous experiment. This may be due to the fact that there are some high-degree nodes in the dataset, that are easy to classify correctly. The converse is true for the fuzzy variant of approximate WL that implicitly assumes that all clusters should have about the same size. 
  The EV algorithm clusters the nodes well in terms of Eigenvector centrality which is to be expected. However, the clustering produced by the GIN also clusters the network well in terms of PageRank and Betweenness centrality.

  \paragraph{Experiment 3: Graph Embedding.}

  In this section, we diverge a little from the optimization-based perspective of the paper up to this point and showcase the effectiveness of the information content of the extracted roles in a few-shot learning downstream task. 
  This links our approach to the application-based role evaluation approach of \cite{rossi2014role}.
  We employ an embedding based on the minimizer of the long-term cost function (\cref{eq:deep_ep_cost_function}, Algorithm \ref{alg:EV}).
  The embedding is defined as follows:
  Let $A$ be the adjacency matrix of the graph that is to be embedded.
  Let $C = \{C_1, ..., C_k\}$ be the optimal clustering found by the 1d-kmeans algorithm on the dominant eigenvector $v$ of $A$.
  The embedding is then made up of the cluster sizes together with the cluster centers.
  \begin{equation*}
    \text{EV}_{\text{emb}} = (|C_1|, ..., |C_k|, \frac{1}{|C_1|}\sum_{i\in C_1}v_i, ...,  \frac{1}{|C_k|}\sum_{i\in C_k}v_i)
  \end{equation*}
  The value for $k$ was found by a grid search over $k \in \{2, ..., 20\}$.
  We benchmark this against the commonly used graph embedding Graph2Vec \cite{narayanan2017graph2vec} and the GIN. 
  We use graph classification tasks from the field of bioinformatics ranging from 188 graphs with an average of 18 nodes to 4110 graphs with an average of 30 nodes.
  The datasets are taken from \cite{Morris+2020} and were first used (in order of \cref{tab:results_embedding}) in \cite{riesen2008iam,schomburg2004brenda,dobson2003distinguishing,wale2008comparison,debnath1991structure}.
  In each task, we use only 10 data points to train a 2-layer MLP on the embeddings and a 4-layer GIN.
  Each hidden MLP and GIN layer has 100 nodes. 
  The Graph2Vec embedding is set to size 16, whereas the GIN receives as embedding the attributes of the nodes of the respective task.
  The GIN is thus \textit{informed}.
  We also report the accuracy of a GIN that is trained on 90\% of the respective data sets.
  The results over 10 independent runs are reported in \cref{tab:results_embedding}.

  \textit{Discussion.}
  Experiment 3 has the EV embedding as the overall winner of few-shot algorithms. Our claim here is not that the EV embedding is a particularly powerful few-shot learning approach, but that the embedding carries a lot of structural information. Not only that but it is robust in the sense that few instances are enough to train a formidable classifier. However, it pales in comparison with the ``fully trained'' GIN, which is better on every dataset. 
  
  \begin{table}[t]
    \caption{\textbf{Few shot graph embedding performance} {Mean accuracy in $\%$ over 10 runs of the EV embedding, Graph2Vec and the GIN. For each run we randomly sample 10 data points for training and evaluate with the rest. As a comparison, the GIN+ is trained on 90\% of the data points.}}
    \vspace{0.15cm}
    \centering \footnotesize
    \begin{tabular}{c c c c c}
     & EV & G2Vec & GIN & \textcolor{gray}{GIN+}\\ 
    \toprule
     AIDS   & $\bm{95.0}\pm 5.2$ & $79.9\pm 4.4$ &$80.0\pm 14.1$&\textcolor{gray}{$97.8 \pm 1.4$}\\
     ENZYMES & $\bm{21.3}\pm 1.7$ & $20.3\pm 1.5$ &$21.0\pm 1.7$&\textcolor{gray}{$60.3 \pm 0.7$}\\
     PROTEINS & $\bm{66.5}\pm 6.4$ & $60.3\pm 3.5$ &$59.6\pm 7.6 $&\textcolor{gray}{$75.4 \pm 1.3$}\\
     NCI1 & $\bm{58.5}\pm 4.0$ & $53.9\pm 1.5$ & $50.0\pm 1.4$&\textcolor{gray}{$82.0 \pm 0.3$}\\
     MUTAG & $\bm{81.5}\pm 7.7$ & $66.9\pm 5.6$ & $77.5\pm 11.1$&\textcolor{gray}{$94.3 \pm 0.5$}\\
    \end{tabular}
    \label{tab:results_embedding}
    \vspace{-0.5cm}
  \end{table}

  \section{Conclusion}
  We proposed an optimization-based framework for role extraction aiming to optimize two cost functions. 
  Each measures how well a certain characteristic of the cEP is upheld. 
  We proposed an algorithm for finding the optimal clustering for the long-term cost function and related the optimization of the other cost function to the retrieval of stochastic roles from the \prmodel.
  
  \paragraph{Limitations}

  The proposed cost functions are sensitive to the degree of the nodes. 
  In scale-free networks, for example, it can happen that few extremely high-degree nodes are put into singleton clusters and the many remaining low-degree nodes are placed into one large cluster.
  The issue is somewhat reminiscent of choosing a minimal min-cut when performing community detection, which may result in a single node being cut off from the main graph.
  A remedy akin to using a normalized cut may thus be a helpful extension to our optimization-based approach.
  Future work may thus consider correcting for the degree, and a further strengthening of our theoretic results.

  \newpage
  \bibliographystyle{abbrvnat}
  \bibliography{refs.bib}
  \newpage
  \section*{Supplementary Material}
  \appendix
  \section{Example of Communities vs. Roles}
  \begin{figure}[h]
    
  \center
  \includegraphics[width=0.48\textwidth]{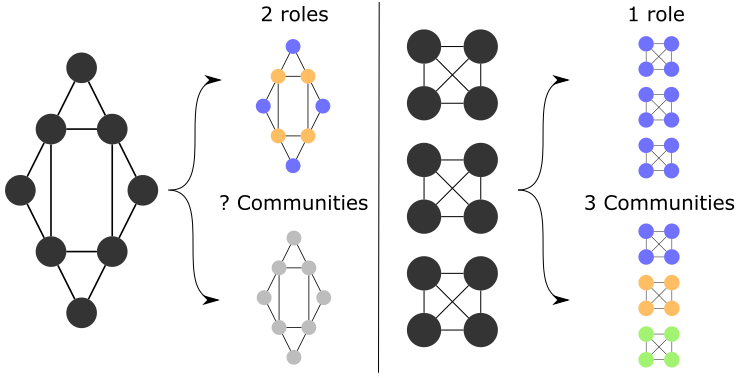}
  \caption{\textbf{Toy example} showing two networks and their role/community structure. The left has two \textit{exact} roles but how many communities it has is not clear. The right has 3 \textit{perfect} communities, but only a single role.}
  \label{fig:toy_example}
  \end{figure}
\section{Example of Isomorphic Nodes that receive a different role}
\begin{figure}[h]
  \center
  \begin{minipage}{0.40\textwidth}
    \includegraphics[width=\textwidth]{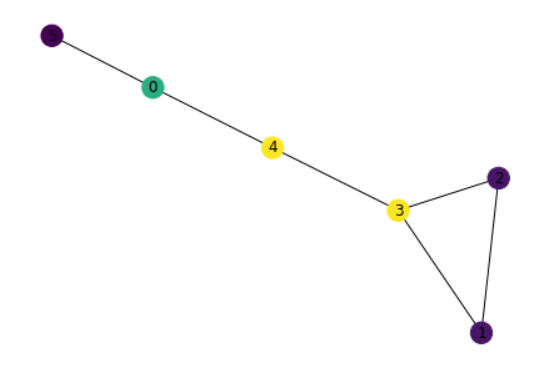}
  \end{minipage}
  \begin{minipage}{0.40\textwidth}
    \includegraphics[width=\textwidth]{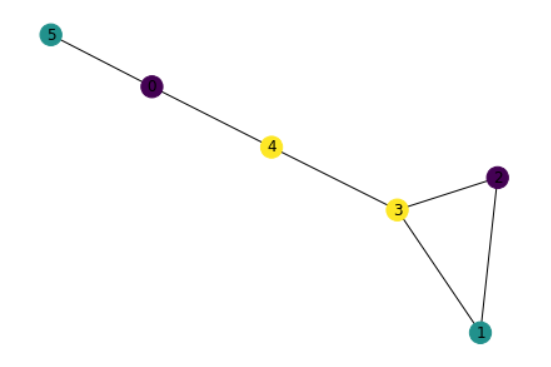}
  \end{minipage}
  \caption{\textbf{Example graphs} showing undesirable properties of the optimizer of \cref{eq:ep_cost_function}. 
  Nodes $1$ and $2$ are in the same cEP class; they are even isomorphic. However, the minimizer of \cref{eq:ep_cost_function} puts them into different classes. 
  The partition minimizing \cref{eq:ep_cost_function} as given by the right yields a score of $0.707$, whereas the best partition that respects the cEP yields $0.816$.}
  \label{fig:counter_ep_cost_function}

\end{figure}
\section{Proof of \cref{thm:EV}}
\EV*
\begin{proof}
  Consider the long-term cost function (eq. \ref{eq:ep_cost_function},\ref{eq:deep_ep_cost_function}):
  \begin{equation*}
    \begin{aligned}
      c_{\text{d-EP}}(A, H) = & \sum_t^d \frac{1}{\rho(A)^t} ||A^tH - H D^{-1} H^\top A^t H ||\\  
      = & \sum_t^d ||(\mathbb{I}_n- H D^{-1} H^\top) \frac{1}{\rho(A)^t} A^t H||\\
      & \mkern -40mu \overset{\lim d \rightarrow \infty}{=} ||(\mathbb{I}_n- H D^{-1} H^\top) w v^\top H||
    \end{aligned}
  \end{equation*}
  
  We arrive at a formulation akin to the $k$-means cost function. 
  However, $w v^\top H$ is in general not independent of the clustering, as would be the case in the usual formulation of $k$-means.
  This can be used advantageously by rewriting the above matrix equation element-wise:
  \begin{equation*}
    \begin{aligned}
      = & \sum_i^n \sum_j^k | w_i (v^\top H_{\_,j}) - \frac{1}{|C(w_i)|} \sum_{l \in C(w_i)} w_l (v^\top H_{\_,j}) | \\
      = & \sum_i^n \sum_j^k | (w_i - \frac{1}{|C(w_i)|} \sum_{l \in C(w_i)} w_l) (v^\top H_{\_,j}) |
    \end{aligned}
  \end{equation*}
      It is possible to completely draw out the constant factor of $\sum_j v^\top H_{\_,j} = \sum_i v_i$ since the row sums of $H$ are 1 and the components $v_i \geq 0$ are non-negative.
  \begin{equation*}
    \begin{aligned}
      = & \sum_i^n \sum_j^k (v^\top H_{\_,j}) | (w_i - \frac{1}{|C(w_i)|} \sum_{l \in C(w_i)} w_l) | \\
      = & \text{const} \sum_i^n | (w_i - \frac{1}{|C(w_i)|} \sum_{l \in C(w_i)} w_l) | \\
    \end{aligned}
  \end{equation*}

  We end up at a formulation equivalent to clustering $w$ into $k$ clusters using $k$-means.
  We can now notice that $w$ is only 1-dimensional and as such the $k$-means objective can be optimized in $\mathcal{O}(n \log(n) + n k)$ \cite{wu1991optimal,gronlund2017fast}.
  \end{proof}

\section{Proof of \cref{thm:short_term_cost_hard}}

\STCH*

\begin{proof}
  We reduce from the \textsc{planar-k-means} problem, which is shown to be NP-hard in \cite{mahajan2012planar}. 
  In \textsc{planar-k-means}, we are given a set $\{(x_1, y_1), ..., (x_n, y_n)\}$ of $n$ points in the plane and a number $k$ and a cost $c$.
  The problem is to find a partition of the points into $k$ clusters such that the cost of the partition is at most $c$, where the cost of a partition is the sum of the squared distances of each point to the center of its cluster.
  We now formulate the decision variant of optimizing the short-term cost which we show is NP-hard.
  \begin{definition}[\textsc{k-aep}]
    \label{def:short_term_cost_hard}
    Let $G = (V, E)$ be a graph, $k \in \mathbb{N}$ and $c \in \mathbb{R}$. 
    \textsc{k-aep} is then the problem of deciding whether there exists a partition of the nodes in $V$ into $k$ clusters such that the short-term cost $\Gamma_{1-\text{EP}}$ (\cref{eq:deep_ep_cost_function}) using the squared L2 norm is at most $c$.
  \end{definition}

  Let $W(X, Y)$ be the sum of the weights of all edges between $X, Y \subseteq V$. Additionally, for a given partition indicator matrix $H$, let $C_{i}$ be the set of nodes $v$ s.t. $H_{v,i} = 1$.
  For the following proof, the equivalent definition of the short-term cost function (eq. \ref{eq:ep_cost_function}) using the squared L2 norm is more convenient:

  \begin{equation*}
    \Gamma_{\text{EP}}(A, H) = \sum_{i} \sum_j \sum_{v \in C_{i}} (W(\{v\}, C_{j}) - \frac{1}{|C_{ i}|}W(C_{i}, C_{j}))^2
  \end{equation*}

  We now show that \textsc{k-aep} is NP-hard by reduction from \textsc{planar-k-means}.
  
  \textit{Construction.} Given $\{(x_1, y_1), ..., (x_n, y_n)\}$ of $n$ points in the plane and a number $k'$ and a cost $c'$ construct the following graph:
  We shift the given points by $- \min_{i \in [n]} x_i$ in their $x$-coordinate and by $- \min_{i \in [n]} y_i$ in their $y$-coordinate.
  This makes them non-negative, but does not change the problem.
  Let $D = 1 + \sum_{i = 1}^n x_i^2+y_i^2$.  Notice that $D$ is an upper bound on the cost of the $k$-means partition.
  To start, let $V = \{a, b\}$.
  Add self-loops of weight $3D$ to $a$ and of weight $6D$ to $b$.
  For each point $(x_i, y_i)$, add a node $m_i$ to $V$ and add edges $m_i a$ of weight $x_i$ and $m_i b$ of weight $y_i$ to $E$.

  $G = (V, E), k=k'+2, c=c'$ are now the inputs to \textsc{k-aep}.
  
  \textit{Correctness.} We now prove that the \textsc{planar-k-means} instance has a solution if and only if \textsc{k-aep} has a solution.
  Assume that \textsc{planar-k-means} has a solution $S' = (S'_1, ..., S'_{k'})$ that has cost $c^* \leq c'$. 
  Then the solution we construct for \textsc{k-aep} is $S = (S_1, ..., S_{k'}, \{a\}, \{b\})$, where $m_i \in S_j \iff (x_i, y_i) \in S'_j$. The cost of this solution is:
  \begin{equation*}
      c^+ = \sum_{i=1}^{k} \sum_{j=1}^{k} \sum_{v \in S_{i}} (W(\{v\}, S_{j}) - \frac{1}{|S_{ i}|}W(S_{i}, S_{j}))^2
  \end{equation*} 
  Since $a$ and $b$ are in singleton clusters, their outgoing edges do not differ from the cluster average and so incur no cost.
  The remaining edges either go from $V\backslash\{a, b\}$ to $a$ or from $V\backslash\{a, b\}$ to $b$. So, the sum reduces to:
  \begin{equation*}
    \begin{aligned}
      c^+ = \sum_i \sum_{v \in S_{i}}\left( (w(v,a) - \frac{1}{|S_i|} W(S_i, \{a\}))^2
          + (w(v,b) - \frac{1}{|S_i|} W(S_i, \{b\}))^2 \right)
    \end{aligned}
\end{equation*} 
  
  Since $\mu_x(S_i) := \frac{1}{|S_i|} W(S_i, \{a\})$ is the average weight of the edges from $S_i$ to $a$, and these edges have weight according to the $x$ coordinate of the point they were constructed from, $\mu_x(S_i)$ is equal to the mean $x$ coordinate within the cluster $S_i'$. 
  This concludes the proof of this direction, as:
  \begin{equation*}
    c^+\mskip -7mu =\mskip -7mu \sum_{S_i' \in S'} \sum_{(x_l, y_l) \in S'_i} \mskip -7mu (x_l - \mu_x(S'_i))^2 + (y_l - \mu_y(S'_i))^2 = c^* \leq c'
  \end{equation*}

  For the other direction, assume we are given a solution $S = (S_1, ..., S_{k+2})$ to \textsc{k-aep} with cost $c^+ \leq c$. 
  We distinguish two cases:
  
  \textit{Case 1: $\exists i \in \mathbb{N} \text{ s.t. } S_i \supsetneq \{a\}$ or $S_i \supsetneq \{b\}$.} 
  Assume that $S_i \supsetneq \{a\}$, if also $b \in S_i$ then the cost is at least the difference of the two self-loops:
  \begin{equation*}
    \begin{aligned}
      c^+ & \geq \sum_{v \in S_{i}} \left(W(\{v\}, S_{i}) - \frac{1}{|S_{i}|}W(S_{i}, S_{i})\right)^2\\
          & \geq \left(\frac{1}{2}\max_{u,v \in S_i} W(\{v\}, S_i) - W(\{u\}, S_i)\right)^2\\ 
          & \geq \left(\frac{1}{2}(w(b,b) - W(\{a\}, S_i))\right)^2\\
          & \geq \left(\frac{1}{2} (6D - 4D)\right)^2 = D^2\geq D
    \end{aligned}
  \end{equation*}
  If instead, some $m_j \in S_i$, then the cost is at least the difference of the self-loop to $a$ and the edge from $m_j$ to $a$:
  \begin{equation*}
    \begin{aligned}
      c^+ & \geq \sum_{v \in S_i} \left(W(\{v\}, S_i) - \frac{1}{|S_i|}W(S_i, S_{i})\right)^2\\
          & \geq \left(\frac{1}{2}(w(a,a) - W(\{m_j\}, S_i))\right)^2\\
          & \geq \left(\frac{1}{2} (3D - D)\right)^2 = D^2 \geq D
    \end{aligned}
  \end{equation*}
  Thus $c^+ \geq D$ is so large that any clustering of the points has at most cost $c \geq D$ thus a solution to the \text{PLANAR-K-MEANS} instance exists. 
  The case where $S_i \supsetneq \{b\}$ is analogous.
  
  \textit{Case 2: Case 1 doesn't hold.}
  In this case, we have $S = (S_1, ..., S_k, \{a\}, \{b\})$ which yields a clustering $S' = (S'_1, ..., S'_k)$ for the \textsc{planar-k-means}, where $m_i \in S_j \iff (x_i, y_i) \in S'_j$. This instance has cost $c^+ = c^* \leq c$.
\end{proof}

\section{Proof of \cref{thm:planted_role_model}}

\PRM*
\begin{proof}
  Consider the adjacency matrix $A_B$ of a simple binomial random graph of size $n$ - i.e. a single block of the SBM. 
  Let $\delta^* < \frac{\delta}{3}$.
  Using the Chernoff bound for binomial random variables, we have that the degree of a single node $i$ is within the ball of size $\delta^*$ with probability:
  $$ 
  \operatorname{Pr}\left(\left|(A_B \mathbbx{1})_i - (\mathbb{E}[(A_B \mathbbx{1})_i]\right| \geq \delta^* \cdot n\right) \leq 2 e^{-2 n (\delta^*)^2}
  $$
  The probability that all nodes fall in close proximity to the expectation, is then simply:
  $$
  \operatorname{Pr}\left(\norm{A_B \mathbbx{1} - \mathbb{E}[A_B \mathbbx{1}]}_{\infty} \geq \delta^* \cdot n\right) \leq \left( 1- 2 e^{-2 n (\delta^*)^2} \right)^n
  $$
  Finally, in the SBM setting, we have $k^2$ such blocks and the probability that none of the nodes are far away from the expectation in any of these blocks is:
  $$
  \operatorname{Pr}\left(\frac{\norm{A H H_{\text{role}}^{(T)}\mkern -10mu  - \mkern -5mu \mathbb{E}[AH H_{\text{role}}^{(T)}]}}{n} \mkern -5mu \geq\mkern -5mu \delta^* \mkern -6mu \right)\mkern -5mu \leq \mkern -5mu \left( 1 \mkern -5mu - \mkern -5mu 2 e^{-2 n (\delta^*)^2} \right)^{nk^2}
  $$
  We can upper bound this by its first-order Taylor approximation:
  \begin{alignat*}{2}
  &\quad \quad \left( 1- 2 e^{-2 n (\delta^*)^2} \right)^{nk^2} \leq p &&\leq 1- 2 nk^2 e^{-2 n (\delta^*)^2} \\
  &\Leftrightarrow \qquad \qquad \quad \ \ \frac{(p-1)(\delta^*)^2}{k^2} && \leq -2n(\delta^*)^2 e^{{-2 n (\delta^*)^2}}\\
  &\Leftrightarrow \qquad \mathcal{W}_{-1}\left(\frac{(p-1)(\delta^*)^2}{k^2}\right) &&\geq -2n(\delta^*)^2\\
  &\Leftrightarrow \ \ \ -\frac{9\mathcal{W}_{-1}((p-1)\delta^2/9k^2)}{2\delta^2} &&\leq n
  \end{alignat*}
  Thus with probability at least $p$, the maximum deviation from the expected mean is $\delta^*$, which is why we simply assume this to be the case going forward, i.e.: 
  $$
  \frac{1}{n}\max_{i,j}\left(\left|\left(A H H_{\text{role}}^{(T)}\mkern -10mu  - \mkern -5mu \mathbb{E}[AH H_{\text{role}}^{(T)}]\right)_{i,j}\right|\right)<\frac{\delta}{3}
  $$
  Consider the L1 distance of nodes inside the same cluster: This is at most $k\frac{\delta}{3}$.
  For nodes that belong to different clusters, this will be at least $k (\delta-2\delta^*) > k \frac{\delta}{3}$.
  Therefore, the average linkage will combine all nodes belonging to the same role before it links nodes that belong to different roles.
  \end{proof}

  \begin{corollary}
    \label{cor:planted_role_model}
    Let $A^{(1)}, ..., A^{(s)}$ be independent samples of the \prmodel~with the same role assignment ($\Omega_{\text{role}}$ must not necessarily be the same).
    Assuming the prerequisites of \cref{thm:planted_role_model} for $A = \frac{1}{s} \sum_{i=1}^s A^{(i)}$ - except eq. \ref{eq:thm_planted_role_model}.
    If 
    \begin{equation*}
        s > -\frac{9\mathcal{W}_{-1}((q-1)\delta^2/9k^2)}{2n\delta^2}
    \end{equation*}
    Then with probability at least $q$:
    Algorithm \ref{alg:SBM_inference} finds the correct role assignment using average linkage for clustering.
  \end{corollary}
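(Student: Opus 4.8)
The plan is to reduce the corollary directly to the single concentration estimate in the proof of \Cref{thm:planted_role_model}, exploiting the fact that averaging $s$ independent samples sharpens that estimate by a factor of $s$ in the exponent; nothing else in the argument changes. First I would observe that all $A^{(i)}$ share the same block indicator $H_B$ and the same role partition, so
$
\mathbb{E}[A] = \tfrac{1}{s}\sum_{i=1}^s \mathbb{E}[A^{(i)}] = H_B \bar{\Omega} H_B^\top
$
with $\bar{\Omega} = \tfrac{1}{s}\sum_{i=1}^s \Omega^{(i)}$. Hence $\mathbb{E}[A]$ still admits the exact cEP with $k$ classes, and the quantities $H_{\text{role}}^{(0)},\dots,H_{\text{role}}^{(T')}$ and $\delta$ in the prerequisites are simply taken with respect to this averaged $\bar\Omega$ (this is exactly the clause ``$\Omega_{\text{role}}$ need not be the same''). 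Everything in the proof of \Cref{thm:planted_role_model} that lies downstream of the concentration bound --- the comparison of the intra-cluster $\ell_1$ distance ($\leq k\delta/3$) with the inter-cluster distance ($\geq k(\delta - 2\delta^*) > k\delta/3$), and the conclusion that average linkage merges same-role nodes before any others --- then carries over verbatim.

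The only step that genuinely needs to be redone is the Chernoff/Hoeffding estimate. For a single (diagonal or off-diagonal) block $B$ of the SBM, the relevant count is now
$
(A_B\mathbbx{1})_i = \sum_{l} A_{i,l} = \tfrac{1}{s}\sum_{j=1}^s \sum_{l} A^{(j)}_{i,l},
$
i.e.\ $\tfrac1s$ times a sum of $ns$ independent $\{0,1\}$-valued variables (independent both across the $s$ samples and over the $n$ endpoints; they need not be identically distributed when the $\Omega^{(i)}$ differ, which is harmless for Hoeffding). Hoeffding's inequality then gives
\[
\Pr\!\left(\left|(A_B\mathbbx{1})_i - \mathbb{E}[(A_B\mathbbx{1})_i]\right| \geq \delta^* n\right) \;\leq\; 2\, e^{-2 n s (\delta^*)^2},
\]
which is precisely the bound used in \Cref{thm:planted_role_model} with $n$ replaced by $ns$ (the number of summands being averaged). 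From here I would repeat the union bound over the $nk^2$ block/coordinate pairs and the Bernoulli-inequality (``first-order Taylor'') plus Lambert-$W$ manipulation from that proof verbatim, carrying the substitution $n \mapsto ns$ through; using $\delta^* < \delta/3$ as before, the chain ending in ${-\tfrac{9\mathcal{W}_{-1}((q-1)\delta^2/9k^2)}{2\delta^2} \leq ns}$ becomes exactly the stated threshold $s > -\tfrac{9\mathcal{W}_{-1}((q-1)\delta^2/9k^2)}{2 n \delta^2}$. On this event, $\tfrac1n \max_{i,j}\big|(AHH_{\text{role}}^{(T)} - \mathbb{E}[AHH_{\text{role}}^{(T)}])_{i,j}\big| < \delta/3$, and the separation argument concludes.

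I expect the main (minor) obstacles to be purely bookkeeping rather than conceptual: (i) verifying that $\mathbb{E}[A]$ retains its cEP when the per-sample $\Omega^{(i)}$ are allowed to vary, which holds because only the shared block membership $H_B$ matters; (ii) applying Hoeffding to the correct number of independent summands, namely $ns$ rather than $n$ or $s$, so that the gain is genuinely linear in $s$; and (iii) tracking the $n \mapsto ns$ substitution cleanly through the Lambert-$W$ algebra so that the factor $1/n$ surfaces in the final bound on $s$. There is no new probabilistic ingredient beyond \Cref{thm:planted_role_model}.
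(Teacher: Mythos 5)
Your proposal is correct and is exactly the intended derivation: the paper states this corollary as a direct consequence of \Cref{thm:planted_role_model} without a separate proof, and your argument --- Hoeffding over the $ns$ independent Bernoulli summands behind each entry of the averaged matrix, yielding $2e^{-2ns(\delta^*)^2}$ in place of $2e^{-2n(\delta^*)^2}$, followed by the same union bound and Lambert-$W$ algebra with $n\mapsto ns$ --- is precisely that consequence. The only bookkeeping nuance is that the union bound is still over $nk^2$ entries (not $nsk^2$), so the ``verbatim'' substitution in the prefactor over-counts by a factor of $s$; since $s\geq 1$ this only makes the stated threshold more conservative, so the conclusion stands.
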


\end{document}